\title{On the Ergodicity of an Autonomous Robot for Efficient Environment Explorations}
\author{Rabiul Hasan Kabir
    \affiliation{
	Department of Mechanical Engineering\\
	New Mexico Institute of Mining and Technology\\
	Socorro, New Mexico, 87801\\
    Email: rabiul.kabir@student.nmt.edu
    }	
}
\author{Kooktae Lee\thanks{Address all correspondence to this author.}
    \affiliation{
    Department of Mechanical Engineering\\
	New Mexico Institute of Mining and Technology\\
	Socorro, New Mexico, 87801\\
    Email: kooktae.lee@nmt.edu
    }	
}
\newtheorem{assumption}{Assumption}
\newtheorem{theorem}{Theorem}
\newtheorem{proposition}{Proposition}
\begin{document}

\maketitle    

\begin{abstract}
{\it This paper addresses the autonomous robot ergodicity problem for efficient environment exploration. The spatial distribution as a reference distribution is given by a mixture of Gaussian and the mass generation of the robot is assumed to be skinny Gaussian. The main problem to solve is then to find out proper timing for the robot to visit as well as leave each component-wise Gaussian for the purpose of achieving the ergodicity. 
The novelty of the proposed method is that no approximation is required for the developed method.
Given the definition of the ergodic function, a convergence condition is derived based on the timing analysis. Also, a formal algorithm to achieve the ergodicity is provided.
To support the validity of the proposed algorithm, simulation results are provided.}
\end{abstract}

\begin{nomenclature}
\entry{$\mathbb{R}$}{A set of real numbers}
\entry{$\mathbb{N}$}{A set of natural numbers}
\entry{$\mathbb{N}_0$}{A set of non-negative integers}
\entry{$k$}{A discrete-time index such that $k\in\mathbb{N}_0$}
\entry{$^{T}$}{A transpose operator}
\entry{$X$}{A domain such that $X\in\mathbb{R}^2$}
\entry{$X(\cdot)$}{A subset of the domain $X$ for a given function}
\entry{$\bigcup$}{A union operator for given sets}
\entry{$\mathcal{N}(\mu, \Sigma)$}{A Gaussian distribution with a mean $\mu$ and a covariance $\Sigma$}
\end{nomenclature}
\section*{INTRODUCTION}
Recently, the ergodic environment exploration scheme for autonomous robots has attracted many attentions because of the exploration efficiency as well as its wide applicability. The efficiency in this context implies that the robot explores an environment such that the distribution from the time-averaged robot trajectories is the same as the given reference distribution. As such, the robot can efficiently cover an environment with some priority associated with the given reference distribution.
This ergodic exploration scheme can be employed for various missions including search and rescue, surveillance and reconnaissance, site inspection, wildlife monitoring, space exploration, etc. 

The first attempt to employ the concept for the ergodicity in autonomous agents is introduced in \cite{mathew2011metrics}. In this study, a new method is provided to measure the ergodicity of agents compared to a given probability measure. This metric, based on Fourier Basis Function, is developed for centralized feedback control laws applicable in multi-agent systems. 
An algorithm for determining the optimal trajectories for autonomous robots is designed in \cite{silverman2013optimal} for the purpose of data acquisition. The target of this study is to design an automated trajectory using optimal transport so that the robot spends more time on the regions where there is a higher probability of getting informative data and less time where the probability of getting information is lower. Fourier basis function based ergodic metric has been used to calculate ergodicity of the robot trajectories. 
A similar concept is investigated by \cite{miller2013trajectory} with an algorithm that generates trajectories with a goal of exploring a region efficiently while considering a probabilistic information density representation of that region. The problem has been defined as a continuous time trajectory optimization problem and the objective function requires the correlation between the spatial probability distribution and time-averaged trajectory. General nonlinear robot dynamics has been considered in this study.
An extension of ergodic area coverage algorithm is presented in \cite{ayvali2017ergodic} for multiple robots working in constrained environment, where there are presence of obstacles and restricted areas. The coordination of multiple agents with various sensing abilities were used for demonstrating ergodic coverage of a domain. 
The study \cite{o2015optimal} proposes an algorithm termed Ergodic Environmental Exploration (E3), a finite receding horizon optimal control algorithm, for the purpose of exploration of an unknown environment that includes regions with varying degrees of importance. This algorithm helps to ensure minimum control effort and minimum difference between time average behavior of system’s trajectory and distribution of the information gain. Experiments have been conducted on robots using this algorithm and results have been presented in this study. 
Also, an iterative optimal control algorithm for general nonlinear dynamics is proposed in \cite{prabhakar2015symplectic}. The metric for information gain is the difference between the spatial distribution and the statistical time-averaged trajectory. Two discrete-time iterative optimization approaches have been demonstrated in this study -- first order discretization and symplectic integration. The authors presented that discretization choice for a system has significant effect on control and state trajectories. 
In \cite{mavrommati2017real}, the authors have developed a receding horizon ergodic control approach and their nonlinear model predictive control algorithm improves the ergodicity between an information density distribution of the sensor domain and real time motion of agents. This approach allows the agents to perform independently and to share information regarding their coverage across a communication network. 
In \cite{de2016ergodic}, a trajectory optimization approach is developed for robotic ergodic exploration where stochastic nonlinear sensor dynamics has been considered. A new approach is introduced in this study and the provided results show that the developed algorithm can generate trajectories that can ensure greater and more predictable ergodicity. A decentralized ergodic control strategy is proposed in \cite{abraham2018decentralized} for multi-agent systems with nonlinear dynamics. The agents only need to share a coefficient related to the action of each agent with each other to make decentralized decisions. 

However, all of the aforementioned researches heavily rely on the ergodic metric defined in \cite{mathew2011metrics}, which employed the Fourier basis function to obtain the distribution for the time-averaged robot trajectories. This Fourier basis function intrinsically entails an approximation during implementation as it has an infinite summation term.
Although a similar idea related to the ergodicity is proposed in \cite{milutinovi2006modeling,hamann2008framework,qi2014multi,ivic2016ergodicity,eren2017velocity} based on the global behaviors of multiple agents using the macrostate of the partial differential equation, the desired behavior is only achieved when the number of agents are extremely large.
Also, a new approach for the ergodic exploration plan is proposed in \cite{kabir2020ACC} based on the optimal transport theory \cite{lee2014probabilistic,lee2014optimal,lee2015performance,lee2018optimal}, however, this method includes an approximation caused by sampling representation of the given spatial distribution.

In this paper, we propose a new approach to realize robot ergodic explorations based on the timing analysis. The spatial distribution as a reference is given as a mixture of Gaussian. Then, the generation of a mass by the robot is assumed to be skinny Gaussian distribution. The problem addressed here is to find out the proper timing for the robot to visit as well as leave each component-wise Gaussian for the purpose of achieving the ergodicity. The major contribution of this study is that unlike other researches that employed Fourier basis function, which necessarily entails an approximation error, the proposed method does not include any approximations. The convergence condition is derived based on the defined ergodic function, to achieve the robot ergodicity.
To support the technical soundness of the proposed method, simulation results are provided.

\section*{PROBLEM DESCRIPTION}
This section addresses the problem for realizing the ergoricity, followed by the formulation of the problem. Throughout the paper, the spatial distribution $\rho^*$ is given as a reference distribution and is assumed to have the following property.
\begin{assumption}
The given spatial distribution $\rho^*$ is expressed as a Mixture of Gaussian (MoG) in the following form:
\begin{align}
    \rho^* = \sum_{i=1}^{m}\alpha_i\mathcal{N}(\mu_i, \Sigma_i),
\end{align}
where $\alpha_i$ is a weight such that $0< \alpha_i < 1$, $\forall i$ with $\sum_{i=1}^{m}\alpha_i=1$ and $\mathcal{N}(\mu_i, \Sigma_i)$ is a Gaussian distribution with a mean $\mu_i$ and a covariance $\Sigma_i$.
\end{assumption}
Although it is not necessary, $\rho^*$ is assumed to be stationary for simplicity. 

The robot generates a unit mass concentrated on the current robot position $\mu_k^R$ with a form of a skinny Gaussian distribution as in the following assumption.
\begin{assumption}
Suppose that the robot position at any discrete time $k$ is given as $\mu_k^R$. At each time step, the mass generated by the robot is represented by a skinny Gaussian $f_k:=\mathcal{N}(\mu_k^R,\Sigma^R)$, where $\Sigma^R$ is stationary and is given such that its distribution is narrow.
\end{assumption}
Mathematically, $f_k$ for the two dimensional case has the following structure:
\begin{equation}\label{eqn: f_{k}}
    \begin{aligned}
        f_{k} &= \frac{1}{\sqrt{(2\pi)^2\vert\Sigma^R\vert}}\exp\left(-\frac{1}{2}(x-\mu_{k}^R)^T(\Sigma^R)^{-1}(x-\mu_{k}^R)\right)\\ 
\end{aligned}
\end{equation}
where $\vert\cdot\vert$ is the determinant and $\mu_k^R$ represents the current robot position.

The proposed method to realize the ergodicity is based on the discrete-time dynamics, however, the time-averaged distribution for the continuous time case is given below to provide better description.
\begin{equation}\label{eqn: rho_cont}
\rho(x,t) = \frac{1}{t}\int f(x,t)dt,
\end{equation}
where $f(x,t)$ denotes a skinny Gaussian in the continuous time case.

Notice that in the above equation, the integral is taken with respect to time and then, it is divided by the total elapsed time, which is to represent the time-averaged behavior of the robot.
The counterpart corresponding to the discrete-time case is then written by
\begin{equation}\label{eqn: rho_dis}
    \rho_{k} := \rho(x,k) =  \frac{1}{k+1}(\sum_{i=0}^{k} f_i),
\end{equation}
where $f_i$ stands for a skinny Gaussian in the discrete-time step. 

The difference between the time-averaged distribution $\rho_k$ and the given spatial distribution $\rho^*$ at time $k$ is written as
\begin{align}\label{eqn: phi_{k}}
    \phi_k=\rho_k-\rho^*
\end{align}
Further, the ergodic function $V_k$ is defined as the integral of the absolute value of $\phi_k$ over the given domain by
\begin{align}
    V_k = \int_{\Omega}|\phi_k|dx\label{eqn: V_k}
\end{align}
Notice that $V_k$ always ranges between $0$ and $2$, regardless of $\rho_k$ and $\rho^*$ due to its mathematical definition. For instance, $V_k=2$ if $\rho_k$ is accumulated completely outside the domain of $\rho^*$.

\begin{figure}[!tbph]
\begin{center}
\includegraphics[scale=0.32]{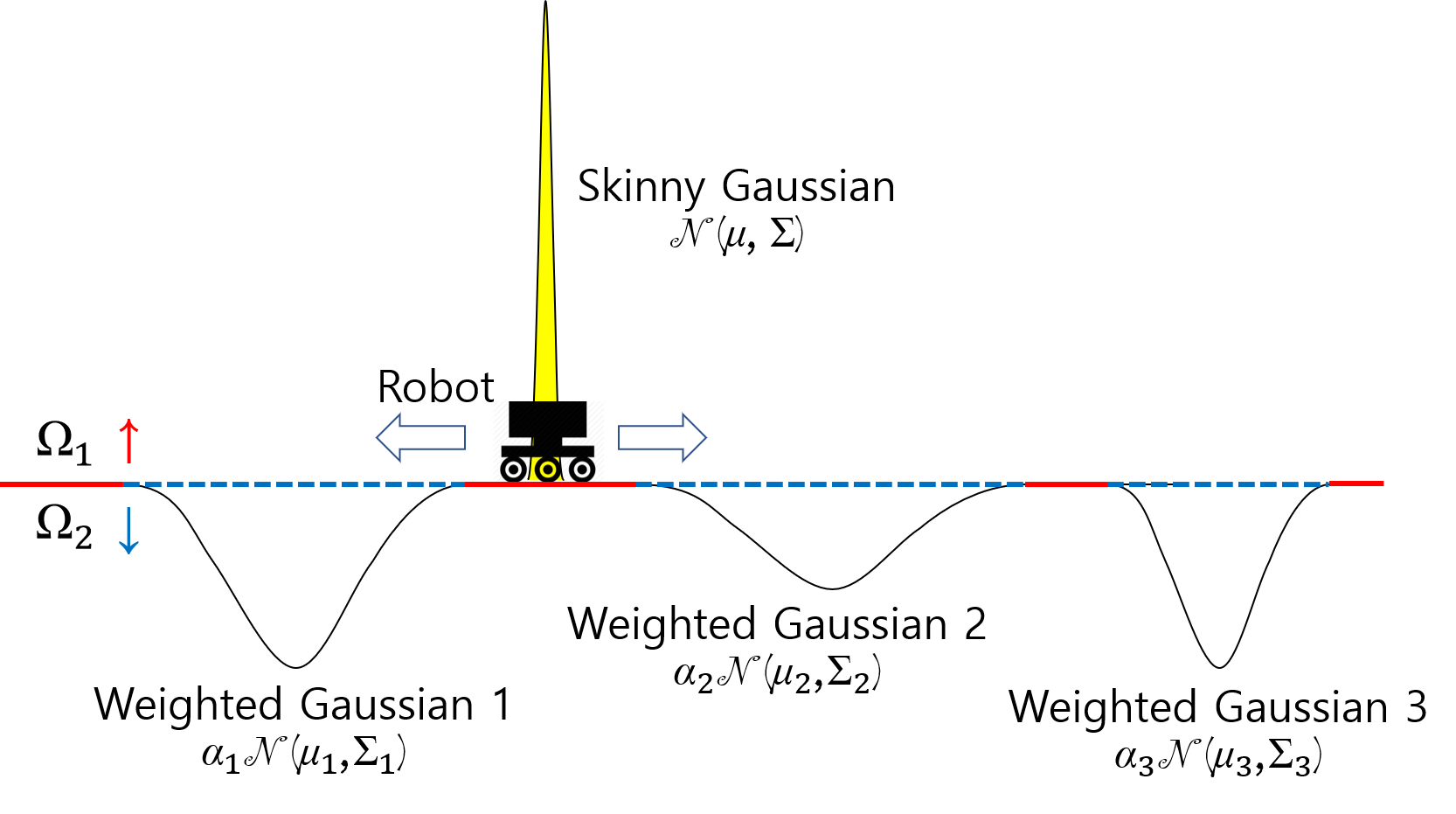}
\caption{\uppercase{Schematic of the robot ergodic trajectory generation problem }}\label{fig: problem_schematic}
\end{center}
\end{figure}

In Fig. \ref{fig: problem_schematic}, we illustrate the robot with the skinny Gaussian mass generation and the given spatial distribution being as an MoG with a negative sign, and hence below the zero base line.
We define this portion (below the zero line) as a hole of which domain is denoted by $\Omega_2$\ (blue dashed lines in Fig. \ref{fig: problem_schematic}). The remaining region outside $\Omega_2$ is represented by $\Omega_1$ (red solid lines in Fig. \ref{fig: problem_schematic}). Alternatively, $\Omega_1$ and $\Omega_2$ are defined as
\begin{align*}
    \Omega_1 &:= X - \Omega_2, \qquad
    \Omega_2 := \{x\vert x\in \bigcup_{i=1}^{m} X(\mathcal{N}\left(\mu_i, \Sigma_i)\right) \},
\end{align*}
where $X(\mathcal{N}\left(\mu_i, \Sigma_i)\right)$ denotes the domain belongs to the Gaussian $\mathcal{N}\left(\mu_i, \Sigma_i\right)$.

The main goal of this paper is to achieve the ergodicity such that $V_k\rightarrow 0$ as $k\rightarrow \infty$, meaning the time-averaged distribution from the robot trajectories converges to the given spatial distribution. One may infer that this goal is achieved by making the robot to stay at each hole (or component-wise weighted Gaussian in a given MoG as shown in Fig. \ref{fig: problem_schematic}) with a given portion $\alpha_i$. However, it is not as simple as it would be because of the following reasons. Recalling the time-average dynamics in \eqref{eqn: rho_dis}, it can be rewritten recursively by
\begin{align}
    \rho_{k} = \dfrac{1}{k+1}\left(k\cdot\rho_{k-1} + f_{k}\right)\label{eqn: f_k recursive}
\end{align}
According to \eqref{eqn: f_k recursive}, the contribution of the current mass generation $f_k$ to the time-averaged distribution $\rho_k$ reduces nonlinearly by $\dfrac{1}{k+1}$, which induces the difficulty in attaining the ergodicity. 
Secondly, even if one hole is completed filled with a mass generated by the robot, the mass vanishes gradually as soon as the robot leaves that hole. Finally, the robot cannot jump from one hole to another and hence, it spills unnecessary masses while traversing the $\Omega_1$ region.
As such, it is not clear what is the timing for the robot to visit each hole and how long it should stay there.
In what follows, we thus provide the analysis to guarantee that $V_k$ is decreasing under a certain condition.

\section*{ERROR ANALYSIS OF ERGODIC OPERATION}
Throughout the paper, the variable $h$ is given to denote the time steps for the robot being inside $\Omega_1$. Similarly, $h'$ indicates the time steps in a hole to explore that hole.
In the multiple holes case, a subscript will be used to stands for a specific hole number.
Before proceeding to the error analysis, the following proposition sheds light on how the time-averaged distribution changes as the robot moves in the domain.

\begin{proposition}\label{proposition: delta_rho_k}
Given the time-averaged distribution $\rho_k$ at any time $k$, the variation in the time-averaged distribution after $h$ time steps, $\Delta \rho_k^h$, can be calculated by the following equation:
\begin{align*}
       \Delta \rho_k^h&= \rho_{k+h}-\rho_k=\frac{1}{k+h+1}\left(\sum_{i=k+1}^{k+h}f_i-h\rho_k \right )
\end{align*}
\end{proposition}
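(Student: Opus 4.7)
The plan is to proceed by direct algebraic manipulation starting from the definition of the time-averaged distribution in equation \eqref{eqn: rho_dis}, since the claim is essentially a bookkeeping identity about how $\rho_k$ updates when $h$ additional skinny Gaussians $f_{k+1},\ldots,f_{k+h}$ are accumulated.

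First, I would write $\rho_{k+h}$ via \eqref{eqn: rho_dis} as
\begin{align*}
\rho_{k+h} = \frac{1}{k+h+1}\sum_{i=0}^{k+h} f_i,
\end{align*}
and split the summation at index $k$ into $\sum_{i=0}^{k} f_i$ and $\sum_{i=k+1}^{k+h} f_i$. Using \eqref{eqn: rho_dis} in the reverse direction, the first block rewrites as $(k+1)\rho_k$. This substitution is the central step: it trades a sum over past skinny Gaussians for the scalar multiple $(k+1)\rho_k$, which is what makes the final expression depend only on $\rho_k$ and the future $f_i$'s rather than the entire history.

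Next, I would form the difference $\Delta\rho_k^h = \rho_{k+h} - \rho_k$, place both terms over the common denominator $k+h+1$, and simplify the $\rho_k$ coefficient: $(k+1) - (k+h+1) = -h$. This yields exactly
\begin{align*}
\Delta\rho_k^h = \frac{1}{k+h+1}\Bigl(\sum_{i=k+1}^{k+h} f_i - h\rho_k\Bigr),
\end{align*}
completing the argument. As a sanity check I would verify the case $h=1$, which should reduce to the recursive update \eqref{eqn: f_k recursive} after rearrangement, confirming consistency with the earlier identity.

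There is no real obstacle here; the statement is a one-line consequence of the definition once the sum is split and $(k+1)\rho_k$ is recognized inside it. The only small care needed is in the index bookkeeping, specifically ensuring that the new mass contributions start at $i=k+1$ rather than $i=k$, so that $f_k$ is counted exactly once as part of $\rho_k$ and not double-counted in the increment.
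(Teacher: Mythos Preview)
Your proposal is correct and follows essentially the same route as the paper: write $\rho_{k+h}$ via \eqref{eqn: rho_dis}, split the sum at $k$, substitute $(k+1)\rho_k=\sum_{i=0}^{k}f_i$, and subtract $\rho_k$ over the common denominator $k+h+1$ to obtain the coefficient $-h$. The paper's proof is identical in substance, only omitting your $h=1$ sanity check against \eqref{eqn: f_k recursive}.
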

\begin{proof}

From \eqref{eqn: rho_dis}, the time-averaged distribution at $k+h$ can be written as
\begin{align}\label{eqn: rho_k+h}
    \nonumber
    \rho_{k+h} &= \frac{1}{k+h+1}(\sum_{i=0}^{k+h} f_i)\\
    &= \frac{1}{k+h+1}\left( (k+1)\rho_k +\sum_{i=k+1}^{k+h} f_i  \right)
\end{align}

where,
\begin{align}
    \qquad\qquad\qquad\qquad(k+1)\rho_k=\sum_{i=0}^{k}f_i\nonumber \qquad\qquad\qquad\text{From \eqref{eqn: rho_dis}}
\end{align}

We can rewrite \eqref{eqn: rho_k+h} as
\begin{align}
     (k+h+1)\rho_{k+h}-(k+1)\rho_k = \sum_{i=k+1}^{k+h} f_i
\end{align}

Finally, the following expression can be obtained for $\Delta \rho_k^h$ from the previous equation.
\begin{align}
    \nonumber
    \Delta\rho_k^h  = \rho_{k+h}-\rho_k= \frac{1}{k+h+1}\left(\sum_{i=k+1}^{k+h} f_i - h\rho_k \right)
\end{align}


\end{proof}

\begin{figure}
\begin{center}
\includegraphics[scale=0.45]{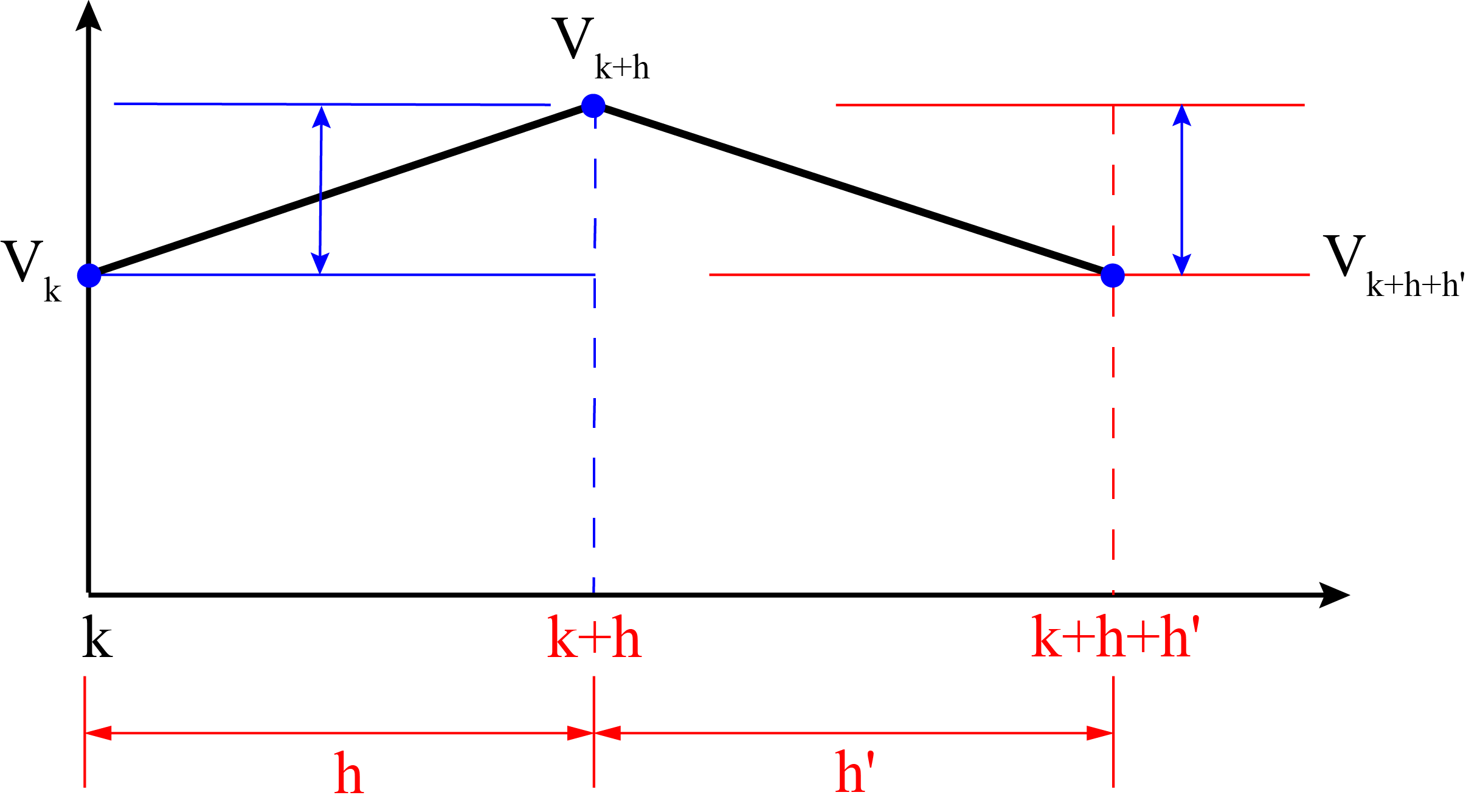}
\caption{\uppercase{Piece-wise variation of ergodic function with discrete time }}\label{fig: V_plot}
\end{center}
\end{figure}

 Travelling the $\Omega_1$ region always increase $V_k$ as the robot is spending time in the area where it should not be. On the other hand, as the robot explores a hole to match with the given spatial distribution for that hole, $V_k$ goes down. 
 The proposed strategy to achieve ergodicity is explained in the following way.
 If the robot spends $h$ time steps in $\Omega_1$ region, then $V_k$ increases by a certain amount, which is illustrated in Fig. \ref{fig: V_plot}. To guarantee the piece-wise decreasing property for $V_k$ at time $k+h+h'$, the decrement from $V_{k+h}$ to $V_{k+h+h'}$ should be greater than the increment from $V_k$ to $V_{k+h}$ as shown in Fig. \ref{fig: V_plot}.
If this conditions is satisfied throughout the robot explorations, then the ergodic function $V_k$ will converge to zero, which is defined as a piece-wise convergence. Therefore, this condition is provided in the following theorem, developed for the piece-wise convergence of the ergodic function. 
 
\begin{theorem}\label{theorem: 1}
Consider the addressed robot ergodicity problem to realize $V_k\rightarrow 0$ as $k\rightarrow \infty$. 
Given $h$ time steps for the robot in $\Omega_1$ while reaching a certain hole, the ergodic function $V_k$ is a piece-wise contraction mapping, if the robot stays at the hole for $h'$ time steps, expressed in the following form:
\begin{align}
     h' > \left( \frac{\int_{\Omega_2}\rho_k dx}{\int_{\Omega_1}\rho_k dx} \right )h\label{eqn: theorem_1}
\end{align}
\end{theorem}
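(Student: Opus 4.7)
The plan is to split the ergodic function into its contributions from $\Omega_1$ and $\Omega_2$ and to track each piece as the robot first spends $h$ steps outside any hole and then $h'$ steps inside the targeted hole. Because the MoG $\rho^{*}$ is (to the precision of the model) supported on $\Omega_2$, I would replace $|\phi_k|$ by $\rho_k$ on $\Omega_1$. On $\Omega_2$, during the convergence phase the robot is still filling the hole, so I would use $|\phi_k|=\rho^{*}-\rho_k$. Introducing the shorthands $A_k:=\int_{\Omega_1}\rho_k\,dx$ and $B_k:=\int_{\Omega_2}\rho_k\,dx$ (so that $A_k+B_k=1$ under the assumption that each skinny Gaussian $f_i$ is contained in $\Omega$), this gives $V_k = A_k - B_k + \int_{\Omega_2}\rho^{*}\,dx$, in which only the combination $A_k-B_k$ is time-varying.

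Next, I would apply Proposition~\ref{proposition: delta_rho_k} separately to the two legs of the trajectory. During the $h$ steps in $\Omega_1$, each $f_i$ contributes $1$ to $\int_{\Omega_1}f_i\,dx$ and $0$ to $\int_{\Omega_2}f_i\,dx$, so the proposition yields $A_{k+h}-A_k = \frac{h\,B_k}{k+h+1}$ and $B_{k+h}-B_k = -\frac{h\,B_k}{k+h+1}$; the two pieces of $V$ therefore add, giving $V_{k+h}-V_k = \frac{2h\,B_k}{k+h+1}$. An entirely analogous computation for the $h'$ steps inside a hole, with the roles of $A$ and $B$ exchanged, yields $V_{k+h+h'}-V_{k+h} = -\frac{2h'\,A_{k+h}}{k+h+h'+1}$.

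Imposing $V_{k+h+h'}<V_k$ and cancelling the common factor $2$ reduces the claim to $\frac{h'\,A_{k+h}}{k+h+h'+1} > \frac{h\,B_k}{k+h+1}$. Substituting $A_{k+h}=\frac{(k+1)A_k+h}{k+h+1}$, clearing denominators, and using $A_k+B_k=1$ to simplify $(k+1)A_k+h-hB_k = (k+1+h)A_k$ should collapse the inequality to $h'A_k > h\,B_k$, which is precisely $h' > \frac{\int_{\Omega_2}\rho_k\,dx}{\int_{\Omega_1}\rho_k\,dx}\,h$.

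The main obstacle is not the algebra but the two structural simplifications underneath it. First, writing $|\phi_k|=\rho^{*}-\rho_k$ on $\Omega_2$ really presumes the hole has not yet been over-filled; this is strictly stronger than what the proof requires, and I would want to check that only the integral inequality $\int_{\Omega_2}(\rho^{*}-\rho_k)\,dx\ge 0$ is actually used so that pointwise comparison can be dropped. Second, the mass-conservation identity $A_k+B_k=1$ tacitly assumes that the skinny Gaussian $f_i$ lies entirely inside $\Omega_1\cup\Omega_2$. A careful write-up should state these as running hypotheses (consistent with the "skinny" assumption on $\Sigma^R$) or absorb the small boundary spillover into a residual term, since otherwise a strict inequality like \eqref{eqn: theorem_1} cannot be guaranteed.
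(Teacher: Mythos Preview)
Your proposal is correct and follows essentially the same route as the paper: both split $V_k$ over $\Omega_1$ and $\Omega_2$, use the sign convention $\phi_k\ge 0$ on $\Omega_1$ and $\phi_k\le 0$ on $\Omega_2$, invoke Proposition~\ref{proposition: delta_rho_k} on each leg to obtain $V_{k+h}-V_k=\frac{2hB_k}{k+h+1}$ and $V_{k+h}-V_{k+h+h'}=\frac{2h'A_{k+h}}{k+h+h'+1}$, and then reduce the contraction inequality to $h'A_k>hB_k$ via the same substitution $A_{k+h}=\frac{(k+1)A_k+h}{k+h+1}$. Your explicit flagging of the two structural hypotheses (pointwise sign of $\phi_k$ on $\Omega_2$ and unit mass of $f_i$ inside $\Omega$) matches what the paper calls the ``ideal case'' and is a welcome clarification rather than a departure.
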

In this case, the following property
\begin{align}
\nonumber
    |V_{k+h+h'}-V_{k+h}|>|V_{k+h}-V_k|
\end{align}
is satisfied.
\begin{proof}
From Fig. \ref{fig: V_plot}, it is shown that for $h$ time steps, the ergodic function $V_k$ goes up such that $V_{k+h}>V_k$ and hence, $|V_{k+h}-V_k|=V_{k+h}-V_k$. The expression for the change of $V$ for $h$ time steps, $|V_{k+h}-V_k|$ can be derived from the following calculation: 
\begin{align}\label{eqn: V for h (1)}
|V_{k+h}-V_k|&=\int_{\Omega_1}(|\phi_{k+h}|-|\phi_k|)dx+\int_{\Omega_2}(|\phi_{k+h}|-|\phi_k|)dx 
\end{align}

For an ideal case, $\phi_k$ is always negative in $\Omega_2$ and positive in $\Omega_1$. Based on this observation, 
\eqref{eqn: V for h (1)} can be rewritten by replacing $\phi_k$ in \eqref{eqn: V for h (1)} with \eqref{eqn: phi_{k}} as
\begin{align}
\nonumber
   |V_{k+h}-V_k|&=\int_{\Omega_1}(\rho_{k+h}-\rho_k)dx-\int_{\Omega_2}(\rho_{k+h}-\rho_k)dx\\\nonumber
   & = \int_{\Omega_1}\Delta\rho_{k}^hdx-\int_{\Omega_2}\Delta\rho_{k}^hdx
\end{align}

Utilizing the result in Proposition  \ref{proposition: delta_rho_k}, the above equation can be further expressed by
\begin{align}\label{eqn: V for h (2)}
   |V_{k+h}-V_k|&=\frac{1}{k+h+1}\left[ \int_{\Omega_1} \left (\sum_{i=k+1}^{k+h}f_i-h \rho_k \right )dx  + \int_{\Omega_2} h\rho_k dx  \right] 
\end{align}

Now, we have
\begin{align}\label{eqn: h_omega1}
\nonumber
    -\int_{\Omega_1}h\rho_k dx + \int_{\Omega_2}h\rho_k dx &= -2h\int_{\Omega_1}\rho_k dx + h\int_{\Omega_1+\Omega_2}\rho_k dx\\ &= -2h\int_{\Omega_1}\rho_k dx + h
\end{align}

and
\begin{align}\label{eqn: f_i omega1}
    \int_{\Omega_1} \sum_{i=k+1}^{k+h}f_i dx = \sum_{i=k+1}^{k+h}\int_{\Omega_1}f_i dx = \sum_{i=k+1}^{k+h}1 = h
\end{align}

Using \eqref{eqn: h_omega1} and \eqref{eqn: f_i omega1}, \eqref{eqn: V for h (2)} can be written as:
\begin{align}\label{eqn: V for h (3)}
   |V_{k+h}-V_k|=\frac{2h\int_{\Omega_2}\rho_k dx}{k+h+1}
\end{align}
where, 
\begin{align}
\nonumber
   \int_{\Omega_2}\rho_k dx = 1-\int_{\Omega_1}\rho_k dx 
\end{align}


The next step is to derive $h'$ such that $V_{k+h+h'}<V_{k+h}$. From Fig. \ref{fig: V_plot}, it can be observed that for $h'$ time steps, the ergodic function $V_k$ decreases. 
In this case, it satisfies $|V_{k+h+h'}-V_{k+h}|= -(V_{k+h+h'}-V_{k+h})$. Then, the expression for $|V_{k+h+h'}-V_{k+h}|$ can be obtained from the following calculation:
\begin{align}\label{eqn: V for h' (1)}
    & \phi \downarrow: \quad |V_{k+h+h'}-V_{k+h}|\\\nonumber
    &=-\left(\int_{\Omega_1}(|\phi_{k+h+h'}|-|\phi_{k+h}|)dx+\int_{\Omega_2}(|\phi_{k+h+h'}|-|\phi_{k+h}|)dx \right )
\end{align}

It has been mentioned before that $\phi$ is always negative in $\Omega_2$ and positive in $\Omega_1$. Using this observation and replacing $\phi$ by its expression from  \eqref{eqn: phi_{k}}, we can rewrite \eqref{eqn: V for h' (1)} as,
\begin{align}\label{eqn: V for h' (2)}
\nonumber
    |V_{k+h+h'}-V_{k+h}|&=-\left(\int_{\Omega_1}(\rho_{k+h+h'}-\rho_{k+h})dx\right.\\\nonumber
    & \qquad\left.-\int_{\Omega_2}(\rho_{k+h+h'}-\rho_{k+h})dx \right )\\
    & = -\left( \int_{\Omega_1}\Delta\rho_{k+h}^{h'}dx - \int_{\Omega_2}\Delta\rho_{k+h}^{h'}dx \right ) 
\end{align}

Again, Proposition 1 for $k+h+1$ and $k+h+h'+1$ can be expressed as :
\begin{align}\label{eqn: delta rho h'}
    \nonumber
       \Delta \rho_{k+h}^{h'}&= \rho_{k+h+h'}-\rho_{k+h}\\
       &=\frac{1}{k+h+h'+1}\left(\sum_{i=k+h+1}^{k+h+h'}f_i-(h')\rho_{k+h} \right )
\end{align}

By substituting terms in \eqref{eqn: V for h' (2)} by \eqref{eqn: delta rho h'}, it further leads to
\begin{align}\label{eqn: V for h' (3)}
\nonumber
    |V_{k+h+h'}-V_{k+h}|&=-\frac{1}{k+h+h'+1}\left[\int_{\Omega_1}-h'\rho_{k+h}dx\right.\\
    & \qquad\left.-\left(\int_{\Omega_2}\sum_{i=k+h+1}^{k+h+h'}f_i -h'\rho_{k+h}dx \right ) \right ]
\end{align}

Similar to \eqref{eqn: h_omega1} and \eqref{eqn: f_i omega1}, the following equations are obtained:
\begin{align}\label{eqn: h_omega2}
    -\int_{\Omega_1}h'\rho_{k+h} dx + \int_{\Omega_2}h'\rho_{k+h} dx= -2h'\int_{\Omega_1}\rho_{k+h} dx + h'
\end{align}
and
\begin{align}\label{eqn: f_i omega2}
    \int_{\Omega_2} \sum_{i=k+h+1}^{k+h+h'}f_i dx = h'
\end{align}

Applying \eqref{eqn: h_omega2} and \eqref{eqn: f_i omega2} to \eqref{eqn: V for h' (3)} results in
\begin{align}\label{eqn: V for h' (4)}
    |V_{k+h+h'}-V_{k+h}|=\frac{2h'\left( 1-\int_{\Omega_2}\rho_{k+h}\right )dx}{k+h+h'+1}
\end{align}
where,
\begin{align}\label{eqn: 1-int rho_k_h}
    1-\int_{\Omega_2}\rho_{k+h}dx = \int_{\Omega_1}\rho_{k+h}dx
\end{align}


To ensure the piece-wise convergence, the following condition must be satisfied:
\begin{equation}\label{eqn: V ineq}
    |V_{k+h+h'}-V_{k+h}|>|V_{k+h}-V_k|
\end{equation}

We can write \eqref{eqn: 1-int rho_k_h} in terms of $k$, $h$ and $\rho_k$ from 
\begin{align}
    \nonumber
     \int_{\Omega_2}\Delta\rho_k^hdx &= \int_{\Omega_2}(\rho_{k+h}-\rho_k)dx=-\frac{h}{k+h+1}\int_{\Omega_2}\rho_k dx\\\nonumber
\int_{\Omega_2}\rho_{k+h}dx &= \frac{k+1}{k+h+1}\int_{\Omega_2}\rho_k dx\\\nonumber
 1-\int_{\Omega_2}\rho_{k+h}dx &=1-\frac{k+1}{k+h+1}\int_{\Omega_2}\rho_k dx = 1- \left(\frac{k+1}{k+h+1} \right )a \nonumber
\end{align}
where,
\begin{align*}
    a = \int_{\Omega_2}\rho_k dx
\end{align*}

Finally, the condition ensuring the validity of \eqref{eqn: V ineq} is then calculated by
\begin{align}\label{eqn: V_h'>V_h}
    \nonumber
    &|V_{k+h+h'}-V_{k+h}|>|V_{k+h}-V_k|\\\nonumber
    \Rightarrow \quad &\frac{2h'\left(1-\left(\frac{k+1}{k+h+1}\right )a \right )}{k+h+h'+1}>\frac{2ha}{k+h+1}\\
    \Rightarrow \quad &h'>\frac{h(k+h+1)a}{(k+h+1)(1-a)}=\frac{ha}{(1-a)}
\end{align}

or alternatively,
\begin{equation*}
     h' > \left( \frac{\int_{\Omega_2}\rho_k dx}{\int_{\Omega_1}\rho_k dx} \right )h
\end{equation*}
\end{proof}

Theorem \ref{theorem: 1} indicates how much time, $h'$, the robot should stay at a certain hole when the robot travels in $\Omega_1$ with $h$ amounts of time steps. Once satisfied, this condition guarantees that the ergodic function $V_k$ will be piece-wise decreasing. 
In the sequel, a formal algorithm is presented to provide the rule for hole departure timing  as well as the robot position update law.
\section*{ALGORITHM}

\begin{figure}
\begin{center}
\includegraphics[scale=0.45]{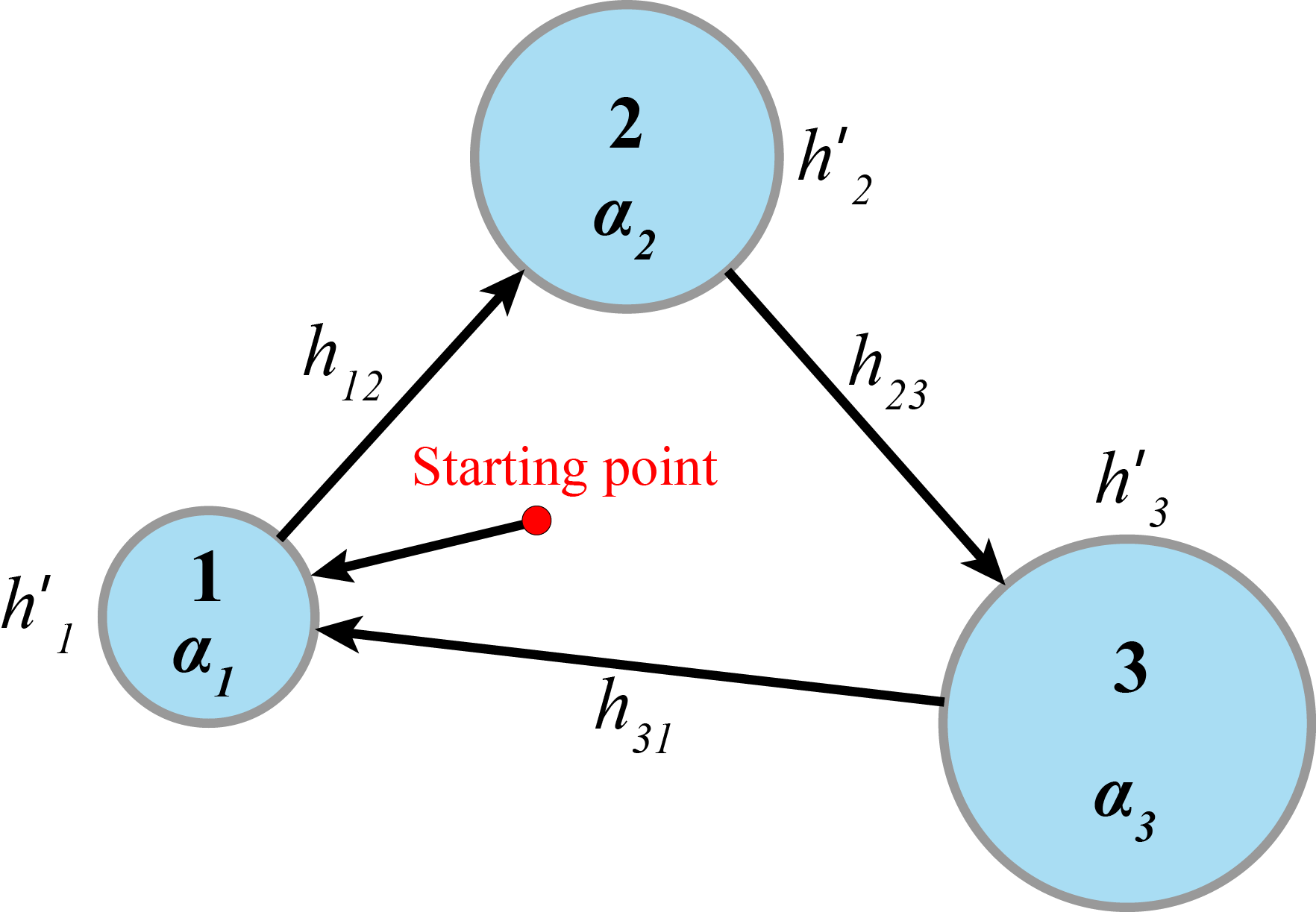}
\caption{\uppercase{Ergodic exploration trajectory of a robot for 3-hole spatial distribution}}\label{fig: Algorithm}
\end{center}
\end{figure}
The formal algorithm to achieve the ergodicity is provided in this section.
Fig. \ref{fig: Algorithm} illustrates how the robot traverses in the domain $\Omega$. The red point in the figure is given as a starting point for the robot. At this moment, the robot searches for the nearest hole as a target hole (e.g., the hole of which 3-sigma boundary is the closest to the robot position). Once selected, the robot moves toward the minimum point in that hole.
The next robot position $\mu_{k+1}^R$ is updated using the following equation:
\begin{align}\label{eqn: robot dynamics}
    \mu_{k+1}^{R} = \mu_{k}^R + v_{\max}\cdot\dfrac{g^k-\mu_{k}^R}{\lVert g^k-\mu_{k}^R \rVert}
\end{align}
where $v_{max}$ is the maximum velocity attainable by the robot and $g^k$ denotes the current minimum value point in the hole as a goal position.\par
Notice that the robot dynamics is not considered here since it is out of scope of this paper. Rather, this study provides the timing for a robot to stay each hole for the realization of the ergodicity. However, one may consider different robot dynamics to update the next robot position $\mu_{k+1}^R$.\par

For the given example in Fig. \ref{fig: Algorithm}, it is clear that hole $1$ is the closest to the robot initial position. Therefore, the robot needs to explore hole $1$ first. 
The robot creates a mass with skinny Gaussian distribution in every time steps, as described in Assumption 2. The robot always determine the location where the minima of $\phi_k$ exists, sets it as the current goal point and updates the goal in every time step while travelling in hole $1$ and moves toward the current updated goal point. 

Although the required staying time in the current hole is proposed in \eqref{eqn: theorem_1} for the convergence of $V_k$, it only provides the lower bound. This implies that the convergence speed for $V_k$ may be too slow if the robot leaves a hole as soon as \eqref{eqn: theorem_1} is satisfied.
The following condition is thus given to provide a proper departure time from the current hole:
\begin{align}\label{eqn: beta}
    h'':=\int_{\Omega_2\cap X(\text{target hole})} \vert \phi_k\vert dx > c_N
\end{align}
where $c_N = \beta \cdot e^{-\gamma\cdot N}$ with 
$\beta$ and $\gamma$ being some positive coefficients and $N$ as a cycle number. This cycle number $N$ increases when the robot visited all holes and arrives at the first visited hole again.

This condition ensures that the robot should stay until the accumulated error $\int_{\Omega_2\cap X(\text{hole 1})} \vert \phi_k\vert dx$ in the current hole becomes greater than $c_N$. In other words, the hole need to be filled by the certain amount defined by \eqref{eqn: beta}.  The reason behind $c_N$ given in the above form instead of zero is as follows.
Firstly, the $\frac{1}{k+1}$ term in \eqref{eqn: rho_dis} indicates that at the beginning of the exploration, the generated mass $f_k$ has greater impact on $\phi_k$ as $\frac{1}{k+1}$ is relatively high. 
Thus, $V_k$ may increase even though the robot is actually filling the hole. Secondly, the robot cannot leave the hole instantaneously when it decides to do so, resulting in some extra mass added in the hole. As a result, the error $\phi_k$ may have a positive value in the hole if $c_N$ is zero, which is not desirable. Therefore, $c_N$ is given in \eqref{eqn: beta} such that at the initial stage, the robot decides to exit the hole when there still exists some error in the hole, and as time increases the robot decides to leave the hole with less and less error in it during later explorations. \par

If the robot staying time in the current hole is greater than $\max(\Bar{h}', \Bar{h}'')$, where  $\Bar{h}'$ and $\Bar{h}''$ are defined by the time when it first satisfies the condition \eqref{eqn: theorem_1} and \eqref{eqn: beta}, respectively, then the robot moves toward the next hole. This next hole is predetermined by the given configuration of an MoG, to connect each hole with the shortest path as shown in Fig. \ref{fig: Algorithm}.  In this way, it is guaranteed that the robot fills the hole such that $V_k$ is contracting with relatively fast convergence speed.
Again in Fig. \ref{fig: Algorithm}, hole $2$ should be the next hole instead of $3$ as it is closer to hole $1$. 
The robot travels to the hole $2$ with a goal position $g^k$ set as the minima in hole $2$. If the robot fills the hole and the given staying condition is greater than $\min(\Bar{h}',\Bar{h}'')$, then the robot moves toward hole $3$.
In traversing $\Omega_1$, the robot may not take the same trajectory from one hole to another since $h_{12}$, $h_{23}$ and $h_{31}$ vary from different explorations, where $h_{ij}$ is the time spent outside the holes for reaching hole $j$ from $i$. \par

A pseudo code is provided below to illustrate the formal procedure of the proposed ergodic algorithm.

\begin{algorithm}[!h]
\caption{Ergodic Exploration Algorithm}\label{algorithm:1}
\begin{algorithmic}[1]
\State initialize $\rho^*$, $v_{max}$, $f_0$, $k\gets 0$
\State Find the target hole:
\If{$k=0$} it is given as the closest hole
\Else{ it is updated by the given configuration of an MoG}
\EndIf
\While{the robot staying time in the current hole $<\max(\Bar{h}',\Bar{h}'')$}
\State Find the minimum location $\min(\phi_k)$ in the target hole and set it as the current goal point $g^k$
\State Update the next robot position $\mu_{k+1}$ by \eqref{eqn: robot dynamics}
\State Fill up the target hole by generating a mass $f_k$
\State Update $\rho_k$ from \eqref{eqn: rho_dis}
\State Calculate $\phi_k$ and $V_k$ from  \eqref{eqn: phi_{k}} and \eqref{eqn: V_k}, respectively
\State $k\gets k+1$
\EndWhile
\State Repeat from step 2 for the next hole
\end{algorithmic}
\end{algorithm}
\section*{SIMULATIONS}
To verify the technical soundness of the proposed methods, simulations were carried out and the results are provided in this section. The spatial distribution is given as an MoG such that
\begin{align*}
    \rho^* &= \sum_{i=1}^{3}\alpha_i\mathcal{N}(\mu_i,\Sigma_i),\\
    \mu_1 &=[80,250]^{T}, \mu_2=[230,60]^T, \mu_3 = [300,310]^T\\
    \Sigma_1 &= \begin{bmatrix}
        15 & 0\\
        0 & 20
    \end{bmatrix},
    \Sigma_2 = \begin{bmatrix}
        30 & 0\\
        0 & 15
    \end{bmatrix},
    \Sigma_3 = \begin{bmatrix}
        15 & 0\\
        0 & 15
    \end{bmatrix}
\end{align*}
where, $\alpha = [\alpha_1, \alpha_2, \alpha_3] = [0.2, 0.3, 0.5]$.

The initial robot position is given as $\mu_0^R = [180,175]^T$ with the covariance matrix for the mass generation in the form of the skinny Gaussian to be $\Sigma^R=\begin{bmatrix}3 & 0\\0 & 3\end{bmatrix}$. The maximum velocity of the robot is limited by $10$.

\begin{figure*}[!t]
    \centering
    \subfloat[]{\includegraphics[scale=0.45]{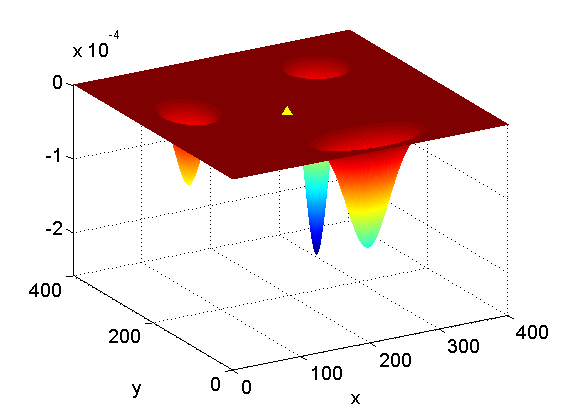}}
    \subfloat[k=10000]{\includegraphics[scale=0.4]{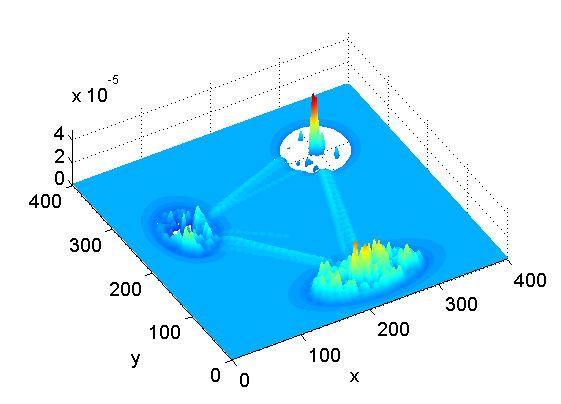}}
    \subfloat[k=50000]{\includegraphics[scale=0.4]{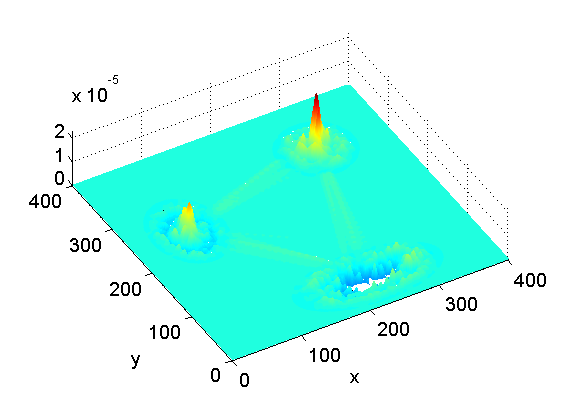}}\\
    \subfloat[k=100000]{\includegraphics[scale=0.4]{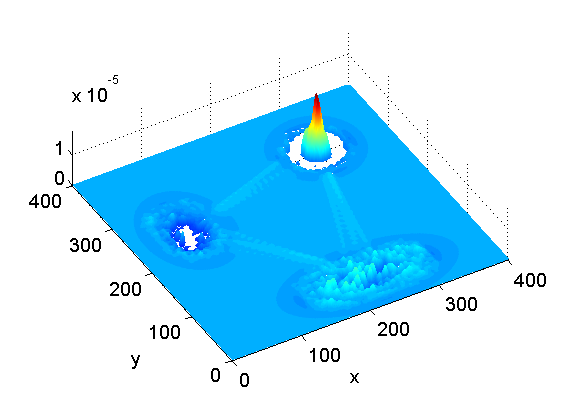}}
    \subfloat[k=150000]{\includegraphics[scale=0.4]{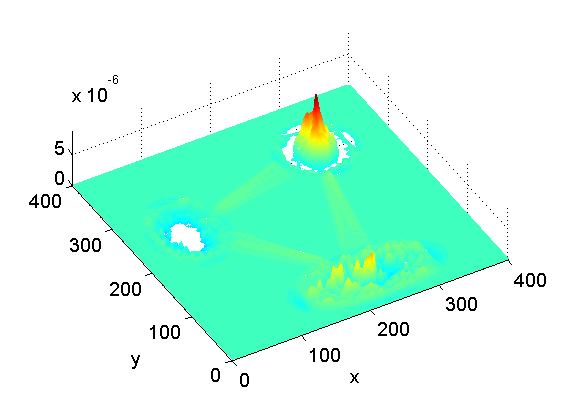}}
    \subfloat[k=200000]{\includegraphics[scale=0.4]{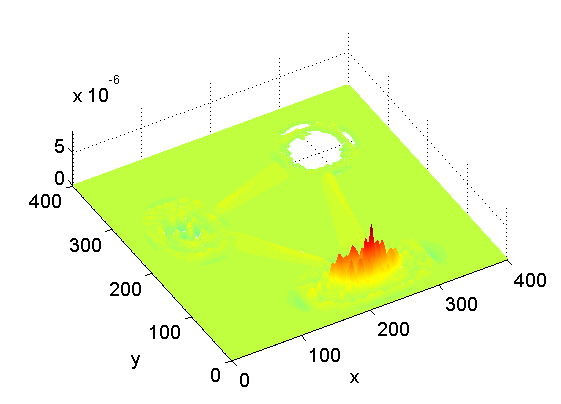}}

    \caption{\uppercase{The spatial distribution with a negative sign and the robot initial position (a) and snapshots of the robot at different time steps (b-f)}}
    \label{fig:snapshot}
\end{figure*}
The spatial distribution with a negative sign and the initial robot position (yellow triangle symbol) are described in Fig. \ref{fig:snapshot} (a). Starting from the initial position, the robot was headed toward the first hole $\mathcal{N}(\mu_1, \Sigma_1)$ since it was the closest one. The robot spent a certain amount of time, $h_1'$, such that the proposed convergence condition \eqref{eqn: theorem_1} and condition for departure time \eqref{eqn: beta} are satisfied. After that, the robot traveled to the second and third holes, and stayed there for the proposed time, which are computed based on \eqref{eqn: theorem_1} and \eqref{eqn: beta}. This is necessary to guarantee the convergence of the ergodic function $V_k$ as time increases. 

In Fig. \ref{fig:error_plot}, $V_k$ vs time plot (top figure) is provided for the convergence result. As shown in this plot, $V_k$ is decreasing as the discrete time $k$ goes up. After a large amount of time, $k=2\times10^5$, the ergodic function reached the value of 0.03, which is evidently small enough to show that the robot attained the ergodicity.
The bottom figure in Fig. \ref{fig:error_plot} indicates the timing for each hole activated as a sub-goal as well as the robot's spending time on it. It is observed that the more time passes, the more robot stays at each hole. This is because the contribution of the mass generation $f_k$ by the robot to $\rho_k$ reduces along with $k$ increment as explained in the last part of the problem description section.  
\begin{figure}[!t]
    \centering
    \includegraphics[scale=0.62]{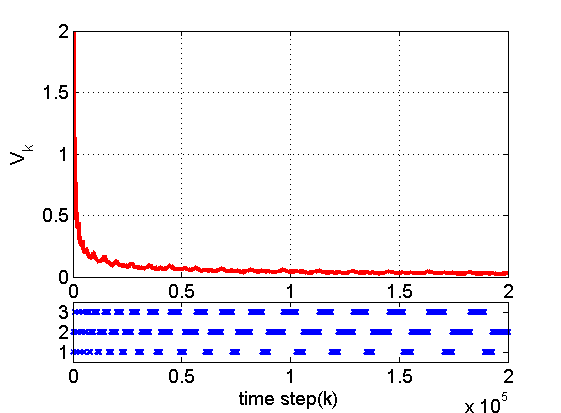}
    \caption{\uppercase{Ergodic function value vs time with the timing diagram of target holes}}
    \label{fig:error_plot}
\end{figure}

Since it may not be clear whether $V_k$ is still decreasing after large time step $k$ (e.g., for $k>10^5$) due to its scale, the log value of $V_k$ is also provided in Fig. \ref{fig:log_error_plot}.  
\begin{figure}[!tbph]
    \centering
    \includegraphics[scale=0.62]{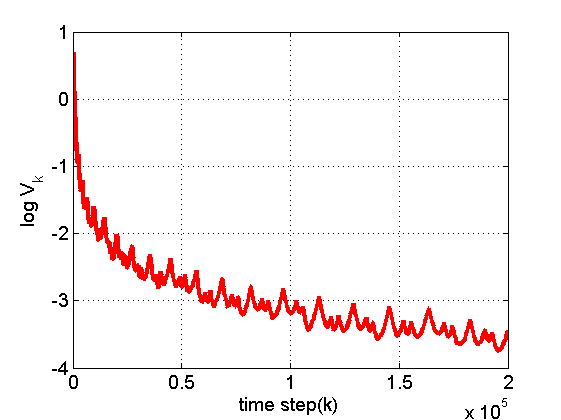}
    \caption{\uppercase{Ergodic function value vs time in log scale}}
    \label{fig:log_error_plot}
\end{figure}
From this plot, $V_k$ keeps decreasing and hence, it can be concluded that the robot will achieve the ergodicity as $k\rightarrow\infty$.
\section*{CONCLUSION}
In this paper, a new approach is proposed to address autonomous robot ergodic exploration problems. 
For this purpose, the mass generation by the robot is assumed to be skinny Gaussian, whereas the spatial distribution as a reference is given by an MoG. 
Differently from the previously developed methods to attain the ergodicity, the proposed one does not include any approximations.
Based on the timing analysis, the convergence condition to achieve the ergodicity is derived.
Also, the formal algorithm to realize the robot ergodic exploration is provided. To verify the proposed methods, simulations were carried out, of which results supports the validity of the proposed convergence result.


\bibliographystyle{asmems4}


%

\bibliography{reference}



\end{document}